\newtheorem{Thm}{Theorem}
\newtheorem{Cor}{Corollary}
\newtheorem{Lem}[Thm]{Lemma}
\newtheorem{Prop}{Proposition}
\theoremstyle{definition}
\newcommand{\bra}[1]{{\left\langle #1 \right|}}
\newcommand{\ket}[1]{{\left| #1 \right\rangle}}
\newcommand{\inn}[2]{{\left\langle #1 | #2 \right\rangle}}
\newcommand{\T}{\mbox{$\mathrm{tr}$}}
\begin{document}
\title{Strong monogamy of multi-party quantum entanglement for partially coherently superposed states}

\author{Jeong San Kim}
\email{freddie1@khu.ac.kr} \affiliation{
 Department of Applied Mathematics and Institute of Natural Sciences, Kyung Hee University, Yongin-si, Gyeonggi-do 446-701, Korea
}
\date{\today}

\begin{abstract}
We provide an evidence for the validity of strong monogamy inequality of multi-party quantum entanglement using
square of convex-roof extended negativity(SCREN).
We first consider a large class of multi-qudit mixed state that are in a partially coherent superposition of a generalized W-class state and the vacuum, and
provide some useful properties about this class of states. We show that monogamy inequality of multi-qudit entanglement in terms of SCREN holds for this class of states.
We further show that SCREN strong monogamy inequality of multi-qudit entanglement also holds for this class of states.
Thus SCREN is a good alternative to characterize the monogamous and strongly monogamous properties of multi-qudit entanglement.
\end{abstract}

\pacs{
03.67.Mn,  
03.65.Ud 
}
\maketitle

\section{Introduction}
\label{Intro}

Whereas classical correlation can be freely shared in multi-party systems, quantum entanglement is known to have
restriction in its shareability. This restricted shareability of entanglement in multi-party quantum systems is known as
{\em monogamy of entanglement}~(MOE)~\cite{T04, KGS}.

Mathematically, MOE was first characterized as an inequality in three-qubit systems by Coffman-Kundu-Wootters(CKW)~\cite{CKW}.
Using {\em tangle} as bipartite entanglement quantification, CKW inequality shows the mutually exclusive nature of
two-qubit entanglement shared in three-qubit systems. Later, three-qubit CKW inequality was generalized for
arbitrary multi-qubit systems~\cite{OV} as well as some cases of higher-dimensional
quantum systems~\cite{KDS, KSRenyi, KT, KSU}. A general monogamy
inequality of arbitrary quantum systems was established in terms of the {\em squashed entanglement}~\cite{KW, BCY10}.

In three-qubit systems, the residual entanglement from the difference between left and right-hand sides of
CKW inequality is interpreted as the genuine three-qubit entanglement, which is referred to as {\em three-tangle}~\cite{DVC}.
Later, the definition of three-tangle was generalized into multi-qubit systems, namely {\em $n$-tangle}, and
the concept of {\em strong monogamy}(SM) inequality of multi-qubit entanglement was proposed by
conjecturing the nonnegativity of the $n$-tangle~\cite{LA}.

To support multi-qubit SM inequality, an extensive numerical evidence was presented for four qubit systems as well as an analytical proof for some cases of
multi-qubit systems~\cite{LA, Kim14}.
However, it is known that tangle fails in the generalization of
CKW inequality for higher dimensional quantum systems due to the existence of counterexamples~\cite{OU, KS}.
Because multi-qubit SM inequality in terms of tangle is reduced to the CKW inequality in three-party quantum systems, the existence of
counterexamples of CKW inequality also implies the violation of SM monogamy inequality
based on tangle in higher-dimensional quantum systems more that qubits.

Recently, {\em square of convex-roof extended negativity}(SCREN) was proposed to characterize
the strongly monogamous property of multi-party quantum entanglement even in higher-dimensional quantum systems~\cite{CK15}.
Besides its coincidence with tangle in qubit systems, which can rephrase the multi-qubit SM inequality in terms of SCREN,
SCREN SM inequality was shown to be true for the counterexamples of tangle in higher-dimensional systems.
It was also analytically shown that SCREN SM inequality is true for a large class of multi-qudit generalized W-class states~\cite{CK15}.
Thus, SCREN is a good alternative for strong monogamy of multi-party quantum entanglement even in higher-dimensional systems.

Here we provide another evidence for the validity of SCREN SM inequality of multi-qudit entanglement. We first consider a large class of multi-qudit
mixed state that are in a {\em partially coherent superposition} of a generalized W-class state and the vacuum. After providing some useful properties
about the structure of partially coherently superposed states, we show that CKW-type monogamy inequality holds for this class of states in terms of SCREN.
We further show that SCREN SM inequality is true for this class of states, which is, we believe, the first result where strong monogamy inequality is studied for multi-qudit
mixed states.

The paper is organized as follows. In Sec.~\ref{Sec: mono}, we review the definition of tangle and SCREN, and their relation in monogamy inequality
of multi-party quantum entanglement.
In Sec.~\ref{Sec: SM_qubit}, we recall the multi-qubit SM inequality in terms of tangle, as well as the multi-qudit SCREN SM inequality.
In Sec.~\ref{sec: PCS}, we provide the definition of partially coherent superposition of multi-qudit generalized W-class states and vacuum as well as some useful
properties about this class of states.
In Sec.~\ref{sec: monoPCS}, we show that CKW-type monogamy inequality in terms of SCREN is saturated by partially coherently superposed states.
In Sec.~\ref{sec: SMPCS}, we show that the SCREN SM inequality of multi-qudit entanglement is saturated by partially coherently superposed states,
and we summarize our results in Sec.~\ref{Sec: Conclusion}.

\section{Monogamy of Multi-Party Quantum Entanglement}
\label{Sec: mono}

For a two-qubit pure state $\ket{\psi}_{AB}$, its tangle (or one-tangle) is defined as
\begin{equation}
\tau\left(\ket{\psi}_{A|B}\right)=4\det \rho_A,
\label{1tangle}
\end{equation}
with the reduced density matrix $\rho_{A}=\T_{B}\ket{\psi}_{AB}\bra{\psi}$~\cite{schtau}.
For a two-qubit mixed state $\rho_{AB}$, its tangle (or two-tangle) is defined as
\begin{equation}
\tau\left(\rho_{A|B}\right)=\bigg[\min_{\{p_h, \ket{\psi_h}\}}\sum_h p_h \sqrt{\tau(\ket{\psi_h}_{A|B})}\bigg]^2,
\label{2tangle}
\end{equation}
where the minimization is taken over all possible pure state decompositions
\begin{equation}
\rho_{AB}=\sum_{h}p_{h}\ket{\psi_h}_{AB}\bra{\psi_h}.
\label{decomp}
\end{equation}

By using one and two-tangles, three-qubit CKW inequality was proposed as
\begin{equation}
\tau\left(\ket{\psi}_{A|BC}\right) \geq \tau\left(\rho_{A|B}\right)+\tau\left(\rho_{A|C}\right),
\label{eq: CKW}
\end{equation}
for any three-qubit pure state $\ket{\psi}_{ABC}$ with two-qubit reduced density matrices $\rho_{AB}=\T_{C}\ket{\psi}_{ABC}\bra{\phi}$ and
$\rho_{AC}=\T_{B}\ket{\psi}_{ABC}\bra{\phi}$.
Later CKW inequality in (\ref{eq: CKW}) was generalized into multi-qubit systems~\cite{OV} as
\begin{equation}
\tau\left(\ket{\psi}_{A_1|A_2\cdots A_n}\right) \geq \sum_{j=2}^{n}\tau\left(\rho_{A_1|A_j}\right),
\label{eq: OV}
\end{equation}
for any $n$-qubit state $\ket{\psi}_{A_1A_2\cdots A_n}$ and
its two-qubit reduced density matrices $\rho_{A_1A_j}$ on subsystems $A_1A_j$ for each $j=2,\cdots ,n$.

Although tangle is a good bipartite entanglement measure, which also shows the monogamy inequality of multi-qubit entanglement,
there exist quantum states in higher dimensions violating CKW inequality,
in $3 \otimes 3\otimes 3$ and even in $3 \otimes 2\otimes 2$ quantum systems~\cite{OU, KS}.
Thus, tangle itself fails in its generalization of CKW inequality for
higher dimensional quantum systems more than qubits.

Another generalization of two-qubit tangle into higher-dimensional quantum systems is using {\em negativity};
for any bipartite pure state $\ket{\phi}_{AB}$, its negativity is defined as
\begin{align}
\mathcal{N}(\ket{\phi}_{A|B})=\left\|\ket{\phi}_{AB}\bra{\phi}^{T_B}\right\|_1-1
\label{eq:pure_negativity}
\end{align}
where $\ket{\phi}_{AB}\bra{\phi}^{T_B} $ is the partial transposition of $\ket{\phi}_{AB}$ and
$\left\|\cdot\right\|_1$ is the trace norm~\cite{VidalW, Negativity, neg3}.

Here, we note that for any two-qubit pure state $\ket{\psi}_{AB}$ with a Schmidt decomposition
\begin{equation}
\ket{\psi}_{AB}=\sqrt{\lambda_1}\ket{e_0}_A\otimes\ket{f_0}_B+\sqrt{\lambda_2}\ket{e_1}_A\otimes\ket{f_1}_B,
\label{schmidt2}
\end{equation}
the {\em square} of negativity coincides with the tangle,
\begin{equation}
\mathcal{N}^2\left(\ket{\psi}_{A|B}\right)=4\lambda_1\lambda_2=\tau\left(\ket{\psi}_{A|B}\right).
\label{NSCeqt}
\end{equation}
Thus the two-tangle of any two-qubit state $\rho_{AB}$ in Eq.~(\ref{2tangle}) can be rephrased as
\begin{align}
\tau\left(\rho_{A|B}\right)
=&\bigg[\min_{\{p_h, \ket{\psi_h}\}}\sum_h p_h \mathcal{N}\left(\ket{\psi_h}_{A|B}\right)\bigg]^2.
\label{tauscre1}
\end{align}

Based on this relation, another generalization of two-qubit tangle into higher-dimensional quantum systems
was proposed as
\begin{align}
\mathcal{N}_{sc}(\rho_{A|B})=&\bigg[\min_{\{p_h, \ket{\psi_h}\}}\sum_h p_h \mathcal{N}\left(\ket{\psi_h}_{A|B}\right)\bigg]^2,
\label{SCREN}
\end{align}
for any bipartite mixed state $\rho_{AB}$ where the minimization is over all pure-state decompositions of $\rho_{AB}$.
The quantity in Eq.~(\ref{SCREN}) is referred to as square of convex-roof extended negativity(SCREN), which is a faithful
bipartite entanglement measure~\cite{LCOK, CK15, SCREN2, SCREN3}.

Consequently, the multi-qubit CKW inequality in (\ref{eq: OV}) can be
rephrased in terms of SCRENs as,
\begin{equation}
{\mathcal{N}_{sc}}\left(\ket{\psi}_{A_1|A_2 \cdots A_n}\right)  \geq
\sum_{j=2}^{n}{\mathcal{N}_{sc}}\left(\rho_{A_1 |A_j}\right).
\label{nineq cren}
\end{equation}
Moreover, Inequality~(\ref{nineq cren}) is still true for the counterexamples violating CKW inequality in terms of tangle~\cite{CK15}.
Thus SCREN is a good alternative for monogamy inequality of multi-qubit entanglement without any known counterexamples even in higher-dimensional quantum systems so far.

\section{Strong Monogamy of Multi-Party Quantum Entanglement}
\label{Sec: SM_qubit}

In three-qubit systems, the residual entanglement from the difference between left
and right-hand sides of CKW Inequality~(\ref{eq: CKW}) is referred to as {\em three-tangle},
\begin{equation}
\tau\left(\ket{\psi}_{A|B|C}\right)=\tau\left(\ket{\psi}_{A|BC}\right)-\tau\left(\rho_{A|B}\right)-\tau\left(\rho_{A|C}\right),
\label{3tangle}
\end{equation}
which is invariant under the permutation of subsystems~\cite{DVC}.
This definition of three-tangle was recently generalized for multi-qubit systems~\cite{LA};
{\em $n$-tangle} of an $n$-qubit pure state $\ket{\psi}_{A_1A_2\cdots A_n}$ is defined as
\begin{align}
\tau\left(\ket{\psi}_{A_1|A_2|\cdots |A_n}\right)
=&\tau\left(\ket{\psi}_{A_1|A_2\cdots A_n}\right)\nonumber\\
&-\sum_{m=2}^{n-1} \sum_{\vec{j}^m}\tau\left(\rho_{A_1|A_{j^m_1}|\cdots |A_{j^m_{m-1}}}\right)^{m/2},
\label{eq:ntanglepure}
\end{align}
where the index vector $\vec{j}^m=(j^m_1,\ldots,j^m_{m-1})$ spans all the ordered subsets of the index set $\{2,\ldots,n\}$ with $(m-1)$ distinct elements.
For each $2 \leq m \leq n-1$, the $m$-tangle of the $m$-qubit reduced density matrix $\rho_{A_1A_{j^m_1}\cdots A_{j^m_{m-1}}}$
is defined as
\begin{align}
\tau\left(\rho_{A_1|A_{j^m_1}|\cdots |A_{j^m_{m-1}}}\right)&\nonumber\\
=\bigg[\min_{\{p_h, \ket{\psi_h}\}}&\sum_h p_h
\sqrt{\tau\left(\ket{\psi_h}_{A_1|A_{j^m_1}|\cdots |A_{j^m_{m-1}}}\right)}\bigg]^2,
\label{ntanglemix}
\end{align}
with the minimization over all possible pure state decompositions
\begin{equation}
\rho_{A_1A_{j^m_1}\cdots A_{j^m_{m-1}}}=\sum_{h}p_{h}\ket{\psi_h}_{A_1A_{j^m_1}\cdots A_{j^m_{m-1}}}\bra{\psi_h}.
\label{decomp}
\end{equation}

For $n=3$, the definition of $n$-tangle in Eq.~(\ref{eq:ntanglepure}) reduces to that of three-tangle
in Eq.~(\ref{3tangle}). $n$-tangle also reduces to the two-tangle of two-qubit state $\rho_{A_1A_2}$ in Eq.~(\ref{2tangle})
for $n=2$.

By conjecturing nonnegativity of $n$-tangle, a {\em strong monogamy}(SM) inequality of multi-qubit entanglement was proposed as
\begin{align}
\tau\left(\ket{\psi}_{A_1|A_2\cdots A_n}\right)\geq\sum_{m=2}^{n-1} \sum_{\vec{j}^m}\tau\left(\rho_{A_1|A_{j^m_1}|\cdots |A_{j^m_{m-1}}}\right)^{m/2}.
\label{eq:SM}
\end{align}
Inequality~(\ref{eq:SM}) encapsulates three-qubit CKW inequality in (\ref{eq: CKW}) for $n=3$, therefore, it is
another generalization of three-qubit monogamy inequality into multi-qubit systems in a stronger form~\cite{strong}.

Whereas SM inequality in (\ref{eq:SM}) proposes a stronger monogamous property of
multi-qubit entanglement with extensive numerical evidences as well as an analytic proof for some class of
of multi-qubit states~\cite{Kim14}, Inequality~(\ref{eq:SM}) is no longer
valid for higher-dimensional quantum systems more than qubits due to the existence of counterexamples~\cite{OU, KS}.
In other words, $n$-tangle fails in its generalization for SM inequality in higher-dimensional systems.

Based on the coincidence of tangle and SCREN in two-qubit systems, another generalization of multi-qubit SM inequality
into higher-dimensional quantum systems was recently proposed~\cite{CK15}; for an $n$-qudit pure state $\ket{\psi}_{A_1A_2\cdots A_n}$,
its {\em $n$-SCREN} is defined as
\begin{align}
{\mathcal{N}_{sc}}\left(\ket{\psi}_{A_1|A_2|\cdots |A_n}\right)
=&{\mathcal{N}_{sc}}\left(\ket{\psi}_{A_1|A_2\cdots A_n}\right)\nonumber\\
-\sum_{m=2}^{n-1}& \sum_{\vec{j}^m}{\mathcal{N}_{sc}}\left(\rho_{A_1|A_{j^m_1}|\cdots |A_{j^m_{m-1}}}\right)^{m/2}.
\label{eq:nCRENpure}
\end{align}
Moreover, {\em SCREN SM inequality} of multi-party entanglement is then proposed as
\begin{align}
{\mathcal{N}_{sc}}\left(\ket{\psi}_{A_1|A_2\cdots A_n}\right)\geq\sum_{m=2}^{n-1} \sum_{\vec{j}^m}{\mathcal{N}_{sc}}\left(\rho_{A_1|A_{j^m_1}|\cdots |A_{j^m_{m-1}}}\right)^{m/2},
\label{eq:CRENSM}
\end{align}
by conjecturing the nonnegativity of $n$-SCREN in Eq.~(\ref{eq:nCRENpure}).

Due to the coincidence of tangle and SCREN in qubit systems, SCREN SM Inequality in (\ref{eq:CRENSM}) is reduced to
tangle-based SM Inequality in (\ref{eq:SM}) for any multi-qubit states. Thus Inequality~(\ref{eq:CRENSM}) is valid for the classes
of multi-qubit quantum states considered in~\cite{LA, Kim14}.
Moreover, it was recently shown that SCREN SM inequality is still true for
a large class of multi-qudit generalized W-class states as well as the counterexamples of CKW inequality in
higer-dimensional quantum systems~\cite{CK15}. Thus SCREN is a good alternative of tangle in characterizing strongly monogamous property
of multi-party quantum entanglement.

\section{Partially coherent superposition of multi-qudit generalized W-class states and vacuum}
\label{sec: PCS}

Let us first recall the definition of multi-qudit generalized W-class
state~\cite{KS}; an $n$-qudit generalized W-class state is defined as
\begin{align}
\left|W_n^d \right\rangle_{A_1\cdots A_n}=\sum_{j=1}^{d-1}(&a_{1j}{\ket {j0\cdots 0}} +a_{2j}{\ket {0j\cdots 0}}\nonumber\\
&+\cdots +a_{nj}{\ket {00\cdots 0j}}),
\label{generalized W state}
\end{align}
for some orthonormal basis $\{\ket{j}_{A_i}\}_{j=0}^{d-1}$ of qudit subsystems $A_i$ with $i=1, \cdots, n$ and
the normalization condition $\sum_{i=1}^{n}\sum_{j=1}^{d-1}|a_{ij}|^2=1$~\cite{general}.

A {\em partially coherent superposition}(PCS) of an $n$-qudit generalized W-class state
and the vacuum $\ket{0}^{\otimes n}$ is a two-parameter class of $n$-qudit states,
\begin{align}
\rho^{\left(p,~\lambda\right)}_{A_1\cdots A_n}=&p\left|W_n^d \right\rangle \left\langle W_n^d \right|+(1-p)\ket 0^{\otimes n}\bra 0^{\otimes n}\nonumber\\
&+\lambda \sqrt{p(1-p)}\left(\left|W_n^d \right\rangle\bra 0^{\otimes n}+\ket 0^{\otimes n}\left\langle W_n^d \right|\right),
\label{partially coherent density matrix}
\end{align}
where $0 \leq p,~\lambda \leq 1$~\cite{KDS}.
Here $\lambda$ is the degree of coherency;
$\rho^{\left(p,~\lambda\right)}_{A_1\cdots A_n}$ in Eq.~(\ref{partially coherent density matrix})
becomes a coherent superposition of a
generalized W-class state and vacuum,
\begin{equation}
\ket \psi _{A_1,\cdots A_n}=\sqrt{p} \left|W_n^d\right\rangle+\sqrt{1-p}\ket 0^{\otimes n},
\label{superposition}
\end{equation}
for the case that $\lambda =1$, and it is an
incoherent superposition, or a mixture
\begin{equation}
\rho_{A_1,\cdots A_n}=p\left|W_n^d \right\rangle \left\langle W_n^d \right|+(1-p)\ket 0^{\otimes n}\bra 0^{\otimes n}
\label{mixture}
\end{equation}
when $\lambda =0$.
For the intermediate value of $\lambda$ between $0$ and $1$, the superposition coherency of
$\rho^{\left(p,~\lambda\right)}_{A_1\cdots A_n}$ is partial, and thus partially coherent superposition.

The PCS state in Eq.~(\ref{partially coherent density matrix}) can also be
interpreted by means of {\em decoherence}; $\rho^{\left(p,~\lambda\right)}_{A_1\cdots A_n}$ is
the resulting state from a coherent superposition of a generalized
W-class state and $\ket{0}^{\otimes n}$ in Eq.~(\ref{superposition})
after the decoherence process so-called {\em phase damping}~\cite{nc},
which can be represented as
\begin{align}
\rho_{A_1\cdots A_n}&=\Lambda(\ket \psi \bra{\psi})\nonumber\\
&=E_{0}\ket{\psi}\bra{\psi}E_{0}^{\dag}+E_{1}\ket{\psi}\bra{\psi}E_{1}^{\dag}+E_{2}\ket{\psi}\bra{\psi}E_{2}^{\dag},
\label{Kraus}
\end{align}
with Kraus operators $E_{0}=\sqrt{\lambda}I$,
$E_{1}=\sqrt{1-\lambda}(I-\ket{0}\bra{0})$ and
$E_{2}=\sqrt{1-\lambda}\ket{0}\bra{0}$.
Thus the PCS state in Eq.~(\ref{partially coherent density matrix})
naturally arises by the effect of decoherence.

Before we further investigate the monogamous property of PCS states in Eq.~(\ref{partially coherent density matrix}),
we provide some useful properties of the generalized W-class states as well as PCS states.
\begin{Thm}
The $n$-qudit generalized W-class state in Eq.~(\ref{generalized W state}) can be considered as
a $(n-1)$-party generalized W-class state in higher-dimensional quantum systems.
\label{lem: 2to1}
\end{Thm}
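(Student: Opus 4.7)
My plan is to exhibit an explicit regrouping of subsystems that turns the $n$-qudit state into an $(n-1)$-party state on a larger local Hilbert space. The key observation is that, by the very form of Eq.~(\ref{generalized W state}), in every computational basis ket appearing in $\left|W_n^d\right\rangle$ at most one of the $n$ tensor factors is in a nonzero state. In particular, if I bundle any two of the parties, say $A_{n-1}$ and $A_n$, into a composite party $B$, then only the kets $\ket{00}_{A_{n-1}A_n}$, $\ket{j0}_{A_{n-1}A_n}$, and $\ket{0j}_{A_{n-1}A_n}$ with $j=1,\ldots,d-1$ ever occur on the $B$ factor. So the composite $B$ uses at most $D:=2d-1$ distinct basis vectors.

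The next step is to define an orthonormal basis $\{\ket{k}_B\}_{k=0}^{D-1}$ of $\mathcal{H}_B$ by
\begin{align}
\ket{0}_B &= \ket{00}_{A_{n-1}A_n},\\
\ket{j}_B &= \ket{j0}_{A_{n-1}A_n},\quad j=1,\ldots,d-1,\\
\ket{j+d-1}_B &= \ket{0j}_{A_{n-1}A_n},\quad j=1,\ldots,d-1,
\end{align}
and to embed the remaining qudits $A_1,\ldots,A_{n-2}$ into $\mathbb{C}^D$ by padding their amplitudes with zeros for $j\ge d$. Rewriting $\left|W_n^d\right\rangle$ in this basis, one obtains
\begin{equation*}
\sum_{j=1}^{D-1}\Bigl(\tilde a_{1j}\ket{j0\cdots 0}+\cdots+\tilde a_{(n-2)j}\ket{0\cdots j\,0}+\tilde a_{(n-1)j}\ket{0\cdots 0\,j}\Bigr)
\end{equation*}
on $A_1\cdots A_{n-2}B$, where $\tilde a_{ij}=a_{ij}$ for $i\le n-2$ and $j\le d-1$ (and zero otherwise), $\tilde a_{(n-1)j}=a_{n-1,j}$ for $j\le d-1$, and $\tilde a_{(n-1),j+d-1}=a_{nj}$ for $j\le d-1$. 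This matches the template of Eq.~(\ref{generalized W state}) for $n-1$ parties of dimension $D$.

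The only thing left to check is the normalization condition $\sum_{i=1}^{n-1}\sum_{j=1}^{D-1}|\tilde a_{ij}|^2=1$, which follows directly because the relabeling is a bijection on the nonzero amplitudes, and so the sum coincides with the original $\sum_{i=1}^{n}\sum_{j=1}^{d-1}|a_{ij}|^2=1$. I do not expect any deep obstacle here: the real content of the lemma is conceptual (the W-class form is preserved under coarse-graining of parties), and the proof is essentially a relabeling. The mildest subtlety is ensuring that the new basis vectors on $B$ remain orthonormal and that the padding of dimensions for $A_1,\ldots,A_{n-2}$ is compatible with the definition's requirement of a common local dimension; both are immediate once the basis above is written down.
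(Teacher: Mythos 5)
Your proof is correct and follows essentially the same route as the paper: bundle two qudit parties into a single composite party, relabel the product basis so that the excitation pattern again has the W-class form, and pad the remaining local dimensions. The only (immaterial) difference is that you compress the composite party into the minimal dimension $D=2d-1$ actually spanned, whereas the paper keeps the full $d^2$-dimensional space and indexes the basis in decimal.
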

\begin{proof}
Let us first consider each of the first $(n-2)$-qudit subsystems $A_1,~ A_2,~\cdots A_{n-2}$ as a
$d$-dimensional quantum system embedded in higher-dimensional system $B_i$ with dimension $d^2$~\cite{space},
\begin{align}
A_i \subseteq B_i \cong \mathbb{C}^{d^2},~i=1,\cdots , n-2,
\label{embed}
\end{align}
and the last two-qudit systems $A_{n-1}\otimes A_n$ as a single system $B_{n-1}$ with dimension $d^2$,
\begin{align}
A_{n-1}\otimes A_n = B_{n-1} \cong \mathbb{C}^{d^2}.
\label{embed2}
\end{align}

For each $i=1,2,\cdots , n-2$, we can extend the orthonormal basis $\{ \ket{j}_{A_i}\}_{j=0}^{d-1}$ of subsystem $A_i$ to obtain an orthonormal basis
of $B_i$, $\{\ket{j}_{B_i}\}_{j=0}^{d^2-1}$, where each $\ket{j}_{A_i}$ is embedded to $\ket{j}_{B_i}$ for $j=0,\cdots, d-1$.
For the last two-qudit systems $A_{n-1}\otimes A_n$ with an orthonormal basis $\{ \ket{j}_{A_{n-1}}\otimes \ket{k}_{A_{n}}\}_{j,k=0}^{d-1}$,
we rename these basis element by using the decimal system to obtain an orthonormal basis $\{\ket{j}_{B_{n-1}}\}_{j=0}^{d^2-1}$ with the following relation
\begin{align}
\ket{j}_{B_{n-1}}=&\ket{0}_{A_{n-1}}\otimes \ket{j}_{A_n},\nonumber\\
\ket{jd}_{B_{n-1}}=&\ket{j}_{A_{n-1}}\otimes \ket{0}_{A_n}
\label{corr}
\end{align}
for $j=0,\cdots, d-1$.

Now the $n$-qudit generalized W-class state in Eq.~(\ref{generalized W state}) can be rewritten as a $(n-1)$-party generalized W-class state
\begin{align}
\left|W_{n-1}^{d^2} \right\rangle_{B_1\cdots B_{n-1}}=\sum_{j=1}^{d^2-1}(&b_{1j}{\ket {j0\cdots 0}} +b_{2j}{\ket {0j\cdots 0}}\nonumber\\
&+\cdots +b_{(n-1)j}{\ket {00\cdots 0j}}),
\label{n-1GW}
\end{align}
with the coefficients $b_{ij}$ defined as
\begin{align}
b_{ij}=a_{ij}&~for~i=1,\cdots n-2,\nonumber\\
b_{(n-1)jd}=&a_{(n-1)j},~ b_{(n-1)j}=a_{nj},
\label{coeff}
\end{align}
for $j=0, \cdots, d-1$, and zero elsewhere.
\end{proof}

The proof of Theorem~\ref{lem: 2to1} deals with the case when the lase two-qudit system $A_{n-1}\otimes A_n$ is considered
as a combined single system. However, we can also analogously show that the choice two-qudit system can be arbitrary among $A_1, \cdots, A_n$.
Moreover, we can iteratively use Theorem~\ref{lem: 2to1} to obtain the following corollary.
\begin{Cor}
For a partition $\mathbb{P}=\{P_1,\ldots,P_m \}$, $m\leq n$ of the set of subsystems $S=\{A_1,\ldots,A_n \}$,
the $n$-qudit generalized W-class state in Eq.~(\ref{generalized W state}) can be considered as
a $m$-party generalized W-class state in higher-dimensional quantum systems.
\label{Cor: part}
\end{Cor}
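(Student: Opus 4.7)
The plan is to prove Corollary~\ref{Cor: part} by iterating Theorem~\ref{lem: 2to1}, leveraging the remark immediately preceding the corollary that the pair of subsystems to be merged may be chosen arbitrarily rather than being forced to be the last two. Each application of the theorem fuses two qudit subsystems into one higher-dimensional subsystem while preserving the generalized W-class form, so successive applications should collapse the subsystems inside each block of the partition.

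Concretely, I would process the partition block by block. For a block $P_k \in \mathbb{P}$ with $|P_k| \geq 2$, I repeatedly invoke Theorem~\ref{lem: 2to1}, each time choosing two subsystems that both lie in $P_k$ and merging them via the embedding and basis relabeling of Eqs.~(\ref{embed})--(\ref{corr}). After $|P_k|-1$ such merges, the subsystems of $P_k$ have been consolidated into a single subsystem of dimension $d^{|P_k|}$, while the subsystems outside $P_k$ are untouched apart from trivial embeddings. The resulting state is still a generalized W-class state, now on $n-|P_k|+1$ parties. Sweeping $k$ from $1$ to $m$ reduces the original $n$-party state to an $m$-party generalized W-class state whose $k$-th subsystem has dimension $d^{|P_k|}$. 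To present it uniformly as in Eq.~(\ref{generalized W state}), I would set $D \geq \max_{k} d^{|P_k|}$ and embed every consolidated subsystem into $\mathbb{C}^D$ by extending its orthonormal basis, exactly in the spirit of Eq.~(\ref{embed}).

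The main obstacle, though not a deep one, is bookkeeping: the coefficients $b_{ij}$ of the final $m$-party W-class state must be tracked through the iterated merges. This amounts to composing the correspondence in Eq.~(\ref{coeff}) with itself at each step and verifying that the resulting coefficient array remains supported on the W-class pattern and satisfies the overall normalization condition. Since each individual merge is already shown to preserve the W-class form in Theorem~\ref{lem: 2to1}, this composition is automatic and no genuinely new analytical ingredient is needed beyond careful indexing of the basis relabelings across successive iterations.
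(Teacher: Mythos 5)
Your proposal is correct and follows essentially the same route as the paper: the paper's proof likewise applies Theorem~\ref{lem: 2to1} iteratively within each party $P_s$ to consolidate its $n_s$ qudits into a single system $B_s$, relabels the basis elements and coefficients as in Eqs.~(\ref{corr}) and (\ref{coeff}), and embeds everything into a common dimension $d_{max}=d^{n_{max}}$, exactly matching your choice of $D\geq\max_k d^{|P_k|}$. The only difference is presentational; the bookkeeping of coefficients you flag is handled in the paper by the same ``analogous renaming'' you describe.
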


\begin{proof}
Let us assume that each party $P_s$ contains $n_s$ number of qudit subsystems for $s=1, \cdots, m$ with $\sum_{s=1}^{m}n_s=n$.
For each $n_s$-qudit subsystems of the party $P_s$, we use the argument in the proof of Theorem~\ref{lem: 2to1}
iteratively to obtain a single system $B_{s}$.
After renaming the basis elements and the coefficients analogously as in Eqs.~(\ref{corr}) and (\ref{coeff}),
the $n$-qubit generalized W-class state in Eq.~(\ref{generalized W state}) can be rewritten as a $m$-party generalized W-class state
\begin{align}
\left|W_{m}^{d^{max}} \right\rangle_{B_1\cdots B_{m}}=\sum_{j=1}^{d^{max}-1}(&b_{1j}{\ket {j0\cdots 0}} +b_{2j}{\ket {0j\cdots 0}}\nonumber\\
&+\cdots +b_{mj}{\ket {00\cdots 0j}}),
\label{mGW}
\end{align}
where
\begin{align}
n_{max}:=\max_{s}\{n_s\},~ d_{max}:=d^{n_{max}}.
\label{max}
\end{align}
\end{proof}

Corollary~\ref{Cor: part} shows that the generalized W-class state in Eq.~(\ref{generalized W state}) preserves its structure with respect to
an arbitrary partition of subsystems. Furthermore, the definition of PCS state in Eq.~(\ref{partially coherent density matrix}) together with Corollary~\ref{Cor: part}
naturally lead us to the following corollary.
\begin{Cor}
For a partition $\mathbb{P}=\{P_1,\ldots,P_m \}$, $m\leq n$ of the set of subsystems $S=\{A_1,\ldots,A_n \}$,
the $n$-qudit PCS state in Eq.~(\ref{partially coherent density matrix}) can be considered as
a $m$-party PCS state in higher-dimensional quantum systems.
\label{Cor: partPCS}
\end{Cor}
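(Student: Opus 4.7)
The plan is to reduce the statement to Corollary~\ref{Cor: part} by treating the PCS state termwise. Fix a partition $\mathbb{P}=\{P_1,\ldots,P_m\}$ of $\{A_1,\ldots,A_n\}$ and apply the regrouping/embedding procedure from the proof of Theorem~\ref{lem: 2to1} blockwise: for each block $P_s$, I would collapse the constituent $n_s$ qudits into a single higher-dimensional system $B_s\cong\mathbb{C}^{d^{n_s}}$ (and then view each $B_s$ inside the common space $\mathbb{C}^{d_{max}}$ as in Corollary~\ref{Cor: part}). By Corollary~\ref{Cor: part}, under this identification the W-component of $\rho^{(p,\lambda)}_{A_1\cdots A_n}$ is exactly the $m$-party generalized W-class state $\ket{W_m^{d_{max}}}_{B_1\cdots B_m}$ of Eq.~(\ref{mGW}).

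The second step is to verify that the vacuum term behaves correctly. Inspecting the basis relabeling in Eqs.~(\ref{corr})--(\ref{coeff}) (and its iterated version used in the proof of Corollary~\ref{Cor: part}), the index $0$ is reserved for the all-zero tensor factor in each block: $\ket{0}_{A_{i_1}}\otimes\cdots\otimes\ket{0}_{A_{i_{n_s}}}$ is renamed to $\ket{0}_{B_s}$. Hence $\ket{0}^{\otimes n}_{A_1\cdots A_n}=\ket{0}^{\otimes m}_{B_1\cdots B_m}$, and the cross terms $\ket{W_n^d}\bra{0}^{\otimes n}$ and $\ket{0}^{\otimes n}\bra{W_n^d}$ convert directly into $\ket{W_m^{d_{max}}}\bra{0}^{\otimes m}$ and $\ket{0}^{\otimes m}\bra{W_m^{d_{max}}}$.

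Substituting these identifications into Eq.~(\ref{partially coherent density matrix}) yields
\begin{align*}
\rho^{(p,\lambda)}_{B_1\cdots B_m}
=&\,p\ket{W_m^{d_{max}}}\bra{W_m^{d_{max}}}+(1-p)\ket{0}^{\otimes m}\bra{0}^{\otimes m}\\
&+\lambda\sqrt{p(1-p)}\bigl(\ket{W_m^{d_{max}}}\bra{0}^{\otimes m}+\ket{0}^{\otimes m}\bra{W_m^{d_{max}}}\bigr),
\end{align*}
which is an $m$-party PCS state with the same parameters $p$ and $\lambda$, completing the argument.

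The only delicate point, and the thing that needs to be checked carefully rather than being truly hard, is to confirm that the block-by-block relabeling in Corollary~\ref{Cor: part} simultaneously sends $\ket{W_n^d}\mapsto\ket{W_m^{d_{max}}}$ \emph{and} fixes $\ket{0}^{\otimes n}$. Both properties follow from the same convention—reserving the label $0$ for the all-zero basis element in each block—so the two transformations are compatible, and no separate calculation is needed beyond this bookkeeping.
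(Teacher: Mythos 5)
Your argument is correct and follows exactly the route the paper indicates: the paper gives no explicit proof of this corollary, merely asserting that it follows from Corollary~\ref{Cor: part} together with the definition of the PCS state, and your write-up is precisely that argument made explicit (blockwise regrouping of the W-component plus the observation that the label-$0$ convention in Eqs.~(\ref{corr})--(\ref{coeff}) fixes the vacuum, so all four terms of Eq.~(\ref{partially coherent density matrix}) convert with $p$ and $\lambda$ unchanged). Your identification of the vacuum-preservation check as the one point needing verification is exactly the right bookkeeping detail the paper leaves implicit.
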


Now we provide another useful property about the PCS states.
\begin{Lem}
Let $\rho^{\left(p,~\lambda\right)}_{A_1\cdots A_n}$ be a partially coherent superposition of a generalized W-class state and the vacuum
in Eq.~(\ref{partially coherent density matrix}). Then the reduced density matrix of $\rho^{\left(p,~\lambda\right)}_{A_1\cdots A_n}$,
obtained by tracing out some subsystems, is again a partially coherent superposition of a generalized W-class state and the vacuum
in the reduced systems.
\label{redPCS}
\end{Lem}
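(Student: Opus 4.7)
The plan is to compute the partial trace directly and recognize the result as a PCS state on the reduced system. Without loss of generality, suppose we trace out the last $k$ subsystems $A_{n-k+1},\ldots,A_n$ and retain $A_1,\ldots,A_{n-k}$; the general case follows by relabeling parties, which is justified by the symmetric form of the generalized W-class state in Eq.~(\ref{generalized W state}). The first step is to split $\ket{W_n^d}$ according to whether the nonzero index lies among the retained or the traced subsystems:
\begin{equation}
\ket{W_n^d}=\sqrt{q}\,\ket{\tilde W'}_{A_1\cdots A_{n-k}}\otimes\ket{0}^{\otimes k}+\sqrt{1-q}\,\ket{0}^{\otimes(n-k)}\otimes\ket{\tilde W''}_{A_{n-k+1}\cdots A_n},
\end{equation}
where $q=\sum_{i=1}^{n-k}\sum_{j=1}^{d-1}|a_{ij}|^2$, and $\ket{\tilde W'}$, $\ket{\tilde W''}$ are normalized generalized W-class states on the retained and traced parties respectively. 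The key observation is the vacuum-orthogonality $\inn{0^{\otimes k}}{\tilde W''}=0$ (and similarly for $\ket{\tilde W'}$), which holds because every term of a generalized W-class state carries a nonzero index in exactly one position.

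Next, I would compute $\T_{A_{n-k+1}\cdots A_n}$ term by term on each of the four blocks of $\rho^{(p,\lambda)}$ in Eq.~(\ref{partially coherent density matrix}). Expanding $\ket{W_n^d}\bra{W_n^d}$ in the split form gives two diagonal blocks and two cross blocks; after tracing, the vacuum-orthogonality noted above kills the cross blocks, leaving $q\ket{\tilde W'}\bra{\tilde W'}+(1-q)\ket{0}^{\otimes(n-k)}\bra{0}^{\otimes(n-k)}$. The vacuum block reduces trivially to $\ket{0}^{\otimes(n-k)}\bra{0}^{\otimes(n-k)}$. For the coherent cross block $\ket{W_n^d}\bra{0}^{\otimes n}$, only the component of $\ket{W_n^d}$ whose traced-out factor is $\ket{0}^{\otimes k}$ survives (the other component is killed by $\inn{\tilde W''}{0^{\otimes k}}=0$), producing $\sqrt{q}\,\ket{\tilde W'}\bra{0}^{\otimes(n-k)}$, and the hermitian conjugate block gives the conjugate expression.

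Collecting the pieces yields a state of the form Eq.~(\ref{partially coherent density matrix}) on the reduced system with new parameters $p'=pq$ and coefficient $\lambda\sqrt{pq(1-p)}$ on the coherence term; matching this to $\lambda'\sqrt{p'(1-p')}$ gives $\lambda'=\lambda\sqrt{(1-p)/(1-pq)}$. The final step is verifying $0\leq p',\lambda'\leq 1$, which is immediate: $0\leq q\leq 1$ gives $0\leq p'\leq p\leq 1$, and $1-p\leq 1-pq$ gives $\lambda'\leq\lambda\leq 1$. I do not anticipate a serious obstacle here; the argument is essentially careful bookkeeping of partial traces, with the only nontrivial input being the vacuum-orthogonality of the generalized W-class state. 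The mild annoyance is notational—keeping the decomposition of $\ket{W_n^d}$ cleanly split between retained and traced factors so that each of the four blocks of $\rho^{(p,\lambda)}$ can be traced in one line—but the coverage of an arbitrary traced subset rather than just the last $k$ subsystems follows at once from the relabeling symmetry already used in Corollary~\ref{Cor: part}.
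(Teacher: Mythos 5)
Your proposal is correct and follows essentially the same route as the paper: a direct computation of the partial trace that identifies the surviving $(n-k)$-party W-class component, with the identical parameter update $p'=pq$ (your $q$ is the paper's $\Omega$) and $\lambda'=\lambda\sqrt{(1-p)/(1-p')}$. The only organizational difference is that the paper traces out one qudit at a time and invokes induction, whereas you trace all $k$ subsystems in a single step via the block split of $\ket{W_n^d}$ and vacuum orthogonality; both yield the same result.
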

\begin{proof}
Due to an inductive argument, it is enough to show the case when a single-qudit subsystem is traced out from
$\rho^{\left(p,~\lambda\right)}_{A_1\cdots A_n}$. Without loss of generality, we consider the case when the last qudit subsystem $A_n$ is traced out.

From a straightforward calculation, we obtain the reduced density matrix $\rho_{A_1\cdots A_{n-1}}=\T_{A_n}\left(\rho^{\left(p,~\lambda\right)}_{A_1\cdots A_n}\right)$ as
\begin{widetext}
\begin{align}
\rho_{A_1\cdots A_{n-1}}=&p\left[\sum_{j,k=1}^{d-1}
\left(a_{1j}{\ket {j\cdots 0}}+\cdots +a_{n-1j}{\ket {0\cdots j}}\right)_{A_1\cdots A_{n-1}}
\left(a^*_{1k}{\bra{k\cdots 0}}+\cdots +a^*_{n-1k}{\bra{0\cdots k}}\right)\right]\nonumber\\
&+\left(p\sum_{j=1}^{d-1}|a_{nj}|^2+1-p\right)\ket 0^{\otimes n-1}_{A_1\cdots A_{n-1}}\bra0^{\otimes n-1}\nonumber\\
&+\lambda\sqrt{p(1-p)}\left[\sum_{j=1}^{d-1}\left(a_{1j}{\ket {j\cdots 0}}+\cdots +a_{n-1j}{\ket {0\cdots j}}\right)_{A_1\cdots A_{n-1}}\bra0^{\otimes n-1}\right]\nonumber\\
&+\lambda\sqrt{p(1-p)}\left[\ket0^{\otimes n-1}_{A_1\cdots A_{n-1}}\sum_{k=1}^{d-1}\left(a^*_{1k}{\bra{k\cdots 0}}+\cdots +a^*_{n-1k}{\bra {0\cdots j}}\right)\right],
\label{redu1}
\end{align}
\end{widetext}
where $\ket 0^{\otimes n-1}_{A_1\cdots A_{n-1}}$ is the vacuum of subsystems $A_1\cdots A_{n-1}$.
By using the notion $\Omega=\sum_{i=1}^{n-1}\sum_{j=1}^{d-1}|a_{ij}|^2$, the normalization condition of $n$-qudit W-class state implies
\begin{align}
\sum_{j=1}^{d-1}|a_{nj}|^2=1-\Omega,
\label{normal}
\end{align}
and Eq.~(\ref{redu1}) can be rewritten as
\begin{align}
\rho_{A_1\cdots A_{n-1}}=&p\Omega\left|W_{n-1}^d \right\rangle \left\langle W_{n-1}^d \right|\nonumber\\
&+(1-p\Omega)\ket 0^{\otimes n-1}\bra 0^{\otimes n-1}\nonumber\\
+\lambda \sqrt{\Omega p(1-p)}&\left(|\left|W_{n-1}^d \right\rangle\bra 0^{\otimes n-1}+\ket 0^{\otimes n-1}\left\langle W_{n-1}^d \right|\right),
\label{redu2}
\end{align}
where
\begin{align}
\left|W_{n-1}^d \right\rangle=\frac{1}{\sqrt{\Omega}}\sum_{j=1}^{d-1}\left(a_{1j}{\ket {j\cdots 0}}+\cdots +a_{n-1j}{\ket {0\cdots j}}\right)
\label{n-1W}
\end{align}
is an $(n-1)$-qudit W-class state on subsystems $A_1\cdots A_{n-1}$.

Moreover, if we let
\begin{align}
p'=p\Omega,
\label{newp}
\end{align}
the coefficient of the third term on the right-hand side of Eq.~(\ref{redu2}) becomes
\begin{align}
\lambda \sqrt{\Omega p(1-p)}=\lambda'\sqrt{p'(1-p')}
\label{newre}
\end{align}
with
\begin{align}
\lambda'= \sqrt{\frac{1-p}{1-p'}}\lambda.
\label{newlam}
\end{align}
From Eq.~(\ref{redu2}) together with Eqs.~(\ref{newp}) and (\ref{newlam}), the reduced density matrix in Eq.~(\ref{redu1})
can be rewritten as
\begin{widetext}
\begin{align}
\rho^{(p', \lambda')}_{A_1\cdots A_{n-1}}=p'\left|W_{n-1}^d \right\rangle \left\langle W_{n-1}^d \right|+(1-p')\ket 0^{\otimes n-1}\bra 0^{\otimes n-1}
+\lambda' \sqrt{p'(1-p')}\left(|\left|W_{n-1}^d \right\rangle\bra 0^{\otimes n-1}+\ket 0^{\otimes n-1}\left\langle W_{n-1}^d \right|\right),
\label{redu3}
\end{align}
\end{widetext}
which is a partially coherent superposition of an $(n-1)$-qudit W-class state $\left|W_{n-1}^d \right\rangle$ and the vacuum with new parameters
$p'$ and $\lambda'$.
\end{proof}
Here we note that $p' = p\Omega \leq p$ as $0\leq \Omega \leq 1$, therefore Eq.~(\ref{newlam}) implies that $\lambda' \leq \lambda$.
In other words, the parameter of coherency $\lambda$ is not increasing as we trace out some subsystems from $\rho^{(p, \lambda)}_{A_1\cdots A_{n}}$.

\section{SCREN Monogamy Inequality for Partially Coherently Superposed States}
\label{sec: monoPCS}

In this section, we show that multi-party SCREN monogamy inequality in~(\ref{nineq cren}) is true
for multi-qudit PCS states in Eq.~(\ref{partially coherent density matrix}). We first recall a useful property about unitary freedom in the ensemble for density matrices
provided by Hughston, Jozsa and Wootters(HJW)~\cite{HJW}.
\begin{Prop} (HJW Theorem)
The sets $\{|\tilde{\phi_i}\rangle\}$ and $\{|\tilde{\psi_j}\rangle\}$ of (possibly unnormalized) states generate the same density matrix
if and only if
\begin{equation}
|\tilde{\phi_i}\rangle=\sum_j u_{ij}|\tilde{\psi_j}\rangle\
\label{HJWeq}
\end{equation}
where $(u_{ij})$ is a unitary matrix of complex numbers, with indices $i$ and $j$, and we
{\em pad} whichever set of states $\{|\tilde{\phi_i}\rangle\}$ or $\{|\tilde{\psi_j}\rangle\}$ is smaller with additional zero vectors
so that the two sets have the same number of elements.
\label{HJWthm}
\end{Prop}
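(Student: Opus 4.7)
The plan is to prove both directions separately, using the spectral purification of $\rho$ as the pivot between the two ensembles.

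For the easy ``if'' direction, I would simply substitute the unitary relation and compute. Assuming $|\tilde{\phi}_i\rangle=\sum_j u_{ij}|\tilde{\psi}_j\rangle$ with $(u_{ij})$ unitary, I expand
\begin{equation}
\sum_i |\tilde{\phi}_i\rangle\langle\tilde{\phi}_i|=\sum_{j,k}\Bigl(\sum_i u_{ij}u_{ik}^*\Bigr)|\tilde{\psi}_j\rangle\langle\tilde{\psi}_k|,
\end{equation}
and use column-orthonormality $\sum_i u_{ij}u_{ik}^*=\delta_{jk}$ to collapse this to $\sum_j|\tilde{\psi}_j\rangle\langle\tilde{\psi}_j|$. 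Padding with zero vectors is harmless because they contribute nothing to either sum.

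For the ``only if'' direction, I would introduce an ancilla $R$ and build purifications. Pad the smaller of $\{|\tilde{\phi}_i\rangle\}$, $\{|\tilde{\psi}_j\rangle\}$ with zero vectors so both ensembles have the same size $N$, and choose an orthonormal basis $\{|i\rangle_R\}_{i=1}^N$ of $R$. Then
\begin{equation}
|\Phi\rangle_{AR}=\sum_{i}|\tilde{\phi}_i\rangle_A\otimes|i\rangle_R,\qquad |\Psi\rangle_{AR}=\sum_{j}|\tilde{\psi}_j\rangle_A\otimes|j\rangle_R
\end{equation}
are both purifications of the common density matrix $\rho=\sum_i|\tilde{\phi}_i\rangle\langle\tilde{\phi}_i|=\sum_j|\tilde{\psi}_j\rangle\langle\tilde{\psi}_j|$ on $A$. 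By the standard uniqueness-of-purification lemma (proved via the Schmidt decomposition and the equality of Schmidt coefficients with the eigenvalues of $\rho$), any two purifications on the same ancilla differ by a unitary on $R$ alone, so $|\Phi\rangle_{AR}=(I_A\otimes U_R)|\Psi\rangle_{AR}$ for some unitary $U$ on $R$.

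To extract the matrix relation, I would take the inner product with $|i\rangle_R$ on both sides. This yields
\begin{equation}
|\tilde{\phi}_i\rangle_A=\sum_j \langle i|U|j\rangle\,|\tilde{\psi}_j\rangle_A,
\end{equation}
which is exactly $|\tilde{\phi}_i\rangle=\sum_j u_{ij}|\tilde{\psi}_j\rangle$ with $u_{ij}=\langle i|U|j\rangle$, and $(u_{ij})$ is unitary because $U$ is.

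The only real obstacle is the uniqueness-of-purification step, i.e., showing that two purifications of $\rho$ sharing the same ancilla are related by a unitary on that ancilla. I would handle this by invoking the Schmidt decomposition: each purification has Schmidt coefficients equal to $\sqrt{p_k}$ (the square roots of the eigenvalues of $\rho$) with the same Schmidt vectors on $A$ (an eigenbasis of $\rho$), so the two Schmidt decompositions differ only by the unitary mapping one orthonormal Schmidt basis on $R$ to the other. Everything else is bookkeeping about the zero-padding, which is automatic once both ensembles are inflated to size $N$.
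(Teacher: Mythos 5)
Your proof is correct. Note, however, that the paper does not actually prove this Proposition: it is quoted as a known result (the Hughston--Jozsa--Wootters theorem) with only a citation to the original reference, so there is no in-paper argument to compare against. What you have written is the standard textbook proof: the ``if'' direction by direct substitution and column-orthonormality of $(u_{ij})$, and the ``only if'' direction by lifting both ensembles to purifications $\sum_i|\tilde{\phi}_i\rangle_A\otimes|i\rangle_R$ and $\sum_j|\tilde{\psi}_j\rangle_A\otimes|j\rangle_R$ and invoking the unitary freedom of purification on the ancilla. Both directions are sound, and your extraction of $u_{ij}=\langle i|U|j\rangle$ is right. The only place where you should be slightly more careful if you write this out in full is the uniqueness-of-purification step when $\rho$ has degenerate eigenvalues: the Schmidt bases on $A$ of the two purifications need not coincide a priori, but within each degenerate eigenspace they are related by a unitary, which can be absorbed into the ancilla unitary by correspondingly rotating the Schmidt vectors on $R$; also, if the rank of $\rho$ is less than $\dim R$, the map $|f_k'\rangle\mapsto|f_k\rangle$ is only defined on a subspace and must be extended arbitrarily to a unitary on all of $R$. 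These are routine fixes and do not affect the validity of your argument.
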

Consequently, Proposition~\ref{HJWthm} implies that for two pure-state decompositions
$\sum_{i}p_{i}\ket{\phi_i}\bra{\phi_i}$ and $\sum_{j}q_{j}\ket{\psi_j}\bra{\psi_j}$,
they represent the same density matrix, that is $\rho=\sum_{i}p_{i}\ket{\phi_i}\bra{\phi_i}=\sum_{j}q_{j}\ket{\psi_j}\bra{\psi_j}$
if and only if $\sqrt{p_{i}}\ket{\phi_i}=\sum_{j}u_{ij}\sqrt{q_{j}}\ket{\psi_j}$ for some unitary matrix $u_{ij}$.

\begin{Thm}
For an $n$-qudit PCS state in Eq.~(\ref{partially coherent density matrix}), we have
\begin{equation}
{\mathcal{N}_{sc}}\left(\rho^{(p,~\lambda)}_{A_1|A_2\cdots A_n}\right)=
\sum_{j=2}^{n}{\mathcal{N}_{sc}}\left(\rho_{A_1 |A_j}\right),
\label{SCREN_mono_sat}
\end{equation}
where ${\mathcal{N}_{sc}}\left(\rho^{(p,~\lambda)}_{A_1|A_2\cdots A_n}\right)$ is the $2$-SCREN of
$\rho^{(p,~\lambda)}_{A_1|A_2 \cdots A_n}$ with respect to bipartition between $A_1$ and the other qudits, and
${\mathcal{N}_{sc}}\left(\rho_{A_1 |A_j}\right)$ are $2$-SCREN of the two-qudit reduced density matrix $\rho_{A_1 |A_j}$
for $j=2, \cdots, n$.
\label{monosat}
\end{Thm}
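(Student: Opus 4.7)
My plan is to use Corollary~\ref{Cor: partPCS} and Lemma~\ref{redPCS} to cast both the $n$-party SCREN on the left and each 2-qudit SCREN on the right of Eq.~(\ref{SCREN_mono_sat}) as SCRENs of 2-party PCS states whose density matrices are supported on an effective $2\times 2$ subspace. On such a subspace the negativity of a pure state equals its two-qubit concurrence by Eq.~(\ref{NSCeqt}), so the convex-roof SCREN coincides with the mixed-state two-qubit tangle, computable by Wootters' formula. Proposition~\ref{HJWthm} provides the unitary parameterization of all pure-state decompositions and confirms that the optimum can be taken to lie entirely inside this $2\times 2$ support.

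For the left-hand side, Corollary~\ref{Cor: partPCS} applied to the bipartition $\{A_1,\,A_2\cdots A_n\}$ presents $\rho^{(p,\lambda)}_{A_1|A_2\cdots A_n}$ as a 2-party PCS state on $A_1\otimes B$, $B=A_2\cdots A_n$, whose W-core Schmidt-decomposes as $\sqrt{A}\,\ket{\tilde\phi}_{A_1}\ket{0}_{B}+\sqrt{\bar A}\,\ket{0}_{A_1}\ket{\tilde W_B}$ with $A=\sum_k|a_{1k}|^2$ and $\bar A=1-A$. The vacuum $\ket{0}_{A_1}\ket{0}_{B}$ also lies in the $2\times 2$ subspace $\{\ket{0},\ket{\tilde\phi}\}_{A_1}\otimes\{\ket{0},\ket{\tilde W_B}\}_{B}$, so the whole density matrix is supported there. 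In the effective two-qubit basis, $\rho$ is supported on $\{\ket{00},\ket{01},\ket{10}\}$ while the spin-flipped $\tilde\rho$ is supported on $\{\ket{01},\ket{10},\ket{11}\}$; the $\lambda$-dependent coherences with the vacuum therefore drop out of $\rho\tilde\rho$, and Wootters' formula yields ${\mathcal{N}_{sc}}(\rho^{(p,\lambda)}_{A_1|B}) = 4p^2A\bar A$, independently of $\lambda$.

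For each $\rho_{A_1A_j}$, iterating Lemma~\ref{redPCS} produces a 2-qudit PCS state with weight $p_j = p(A+B_j)$, where $B_j=\sum_k|a_{jk}|^2$, coherence $\lambda_j$ given by the iterated update~(\ref{newlam}), and W-core $(A+B_j)^{-1/2}\bigl(\sum_k a_{1k}\ket{k0}+\sum_k a_{jk}\ket{0k}\bigr)$. The same $2\times 2$-subspace argument in the bipartition $A_1|A_j$ gives Schmidt weights $\sqrt{A/(A+B_j)}$ and $\sqrt{B_j/(A+B_j)}$, whence ${\mathcal{N}_{sc}}(\rho_{A_1|A_j}) = 4p_j^2\cdot(A/(A+B_j))\cdot(B_j/(A+B_j)) = 4p^2 A B_j$ after the $(A+B_j)^2$ factors cancel. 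Summing over $j$ gives $\sum_{j=2}^n 4p^2 A B_j = 4p^2 A\sum_{j=2}^n B_j = 4p^2 A\bar A$, matching the left-hand side. The main obstacle is justifying rigorously that the SCREN really is $\lambda$-independent and equal to the effective two-qubit tangle on both sides; this relies on the support-restriction argument and the disjoint-block structure of $\rho$ versus $\tilde\rho$, and the precise $(A+B_j)^2$ cancellation on the right is what forces equality rather than just an inequality.
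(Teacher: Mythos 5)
Your proof is correct, but it follows a genuinely different route from the paper's. The paper argues by induction on $n$: for the base case $n=3$ it writes $\rho^{(p,\lambda)}_{A_1A_2A_3}=|\tilde{x}\rangle\langle\tilde{x}|+|\tilde{y}\rangle\langle\tilde{y}|$, uses the HJW theorem (Proposition~\ref{HJWthm}) to parameterize every pure-state decomposition, computes the negativity of each member $u_{h1}|\tilde{x}\rangle+u_{h2}|\tilde{y}\rangle$ explicitly, and observes that the average $\sum_h p_h\sqrt{\mathcal{N}_{sc}}$ is identical for all decompositions, so the convex roof requires no actual optimization; the inductive step then merges $A_{n-1}A_n$ into one party via Corollary~\ref{Cor: partPCS} and invokes Lemma~\ref{redPCS} to reduce to the three-party case. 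You instead collapse $A_2\cdots A_n$ into a single party at once, note that the state and each marginal $\rho_{A_1A_j}$ are supported on an effective $2\times2$ subspace, and evaluate the convex roof in closed form with Wootters' formula, where the disjoint supports of $\rho$ and $\tilde\rho$ eliminate the vacuum coherences and hence the $\lambda$-dependence. Both computations yield $4p^2A\bar{A}$ on the left and $4p^2AB_j$ termwise on the right, and the identity $\sum_{j}B_j=\bar{A}$ closes the argument either way. Your version is shorter, avoids the induction, and makes the $\lambda$-independence structurally transparent; its cost is that it imports Wootters' closed-form solution of the two-qubit convex roof, which the paper never needs --- its decomposition-independence observation is self-contained and is reused in the strong-monogamy section. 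The one step you should spell out rather than gesture at is that the negativity of a pure state across the cut $A_1|A_2\cdots A_n$ depends only on its two Schmidt coefficients, so computing it inside the effective two-qubit space is legitimate and the convex roofs over the full space and the subspace coincide.
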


\begin{proof}
We use mathematical induction on the number of subsystems $n$, and first show the saturation of SCREN monogamy inequality
for three-qudit systems.
For three-qudit PCS states, we have
\begin{align}
\rho^{\left(p,~\lambda\right)}_{A_1A_2A_3}=&p\left|W_3^d \right\rangle \left\langle W_3^d \right|+(1-p)\ket{000}\bra{000}\nonumber\\
&+\lambda \sqrt{p(1-p)}\left(\left|W_3^d \right\rangle\bra{000}+\ket{000}\left\langle W_n^d \right|\right),
\label{PCS3}
\end{align}
where $\left|W_3^d \right\rangle_{A_1A_2A_3}$ is a three-qudit W-class state
\begin{align}
\left|W_3^d \right\rangle_{A_1A_2A_3}=\sum_{j=1}^{d-1}\left(a_{1j}\ket{j00}+ a_{2j}\ket{0j0}+a_{3j}\ket{00j}\right)
\label{GW3}
\end{align}
with normalization $\sum_{j=1}^{d-1}\left(|a_{1j}|^2+|a_{2j}|^2+|a_{3j}|^2\right)=1$.

From the definition of two-SCREN of $\rho^{\left(p,~\lambda\right)}_{A_1A_2A_3}$ between $A_1$ and $A_2A_3$,
\begin{equation}
{\mathcal{N}_{sc}}\left(\rho^{\left(p,~\lambda\right)}_{A_1|A_2A_3}\right)=\bigg[\min_{\{p_h, \ket{\psi_h}\}}\sum_h p_h
\sqrt{{\mathcal{N}_{sc}}\left(\ket{\psi_h}_{A_1|A_2A_3}\right)}\bigg]^2,
\label{2SCREN3}
\end{equation}
we need to consider the minimization over all possible pure-state decompositions of $\rho^{\left(p,~\lambda\right)}_{A_1A_2A_3}$.
By considering two unnormalized states
\begin{align}
|\tilde{x}\rangle_{A_1A_2A_3}=&\sqrt{p}\left|W_3^d \right\rangle_{A_1A_2A_3}+\lambda\sqrt{1-p}\ket{000}_{A_1A_2A_3},\nonumber\\
|\tilde{y}\rangle_{A_1A_2A_3}=&\sqrt{(1-p)(1-\lambda^2)}\ket{000}_{A_1A_2A_3},
\label{3xytilde}
\end{align}
we note that the PCS state in Eq.~(\ref{PCS3}) can be rewritten as
\begin{align}
\rho^{\left(p,~\lambda\right)}_{A_1A_2A_3}=|\tilde{x}\rangle_{A_1A_2A_3} \langle \tilde{x}|+|\tilde{y}\rangle_{A_1A_2A_3}\langle\tilde{y}|.
\label{PCS3re}
\end{align}

For any pure state decomposition
\begin{equation}
\rho^{\left(p,~\lambda\right)}_{A_1A_2A_3}=\sum_{h}|\tilde{\phi_h}\rangle_{A_1 A_2 A_3} \langle\tilde{\phi_h}|,
\label{decomp1}
\end{equation}
where
$|\tilde{\phi_h}\rangle_{A_1 A_2 A_3}$ is an unnormalized state in three-qudit subsystem $A_1A_2A_3$,
HJW theorem in Proposition~\ref{HJWthm} assures that there exists an $r\times r$ unitary matrix $(u_{hl})$ such that
\begin{equation}
|\tilde{\phi_h}\rangle_{A_1A_2A_3}=u_{h1}\ket{\tilde{x}}_{A_1 A_2 A_3}+u_{h2}\ket{\tilde{y}}_{A_1 A_2 A_3},
\label{HJWrelation}
\end{equation}
for each $h$.

For the normalized state $\ket{\phi_h}_{A_1 A_2 A_3}=|\tilde{\phi}_h\rangle_{A_1 A_2 A_3}/\sqrt{p_h}$
with $p_h =|\langle\tilde{\phi}_h|\tilde{\phi}_h\rangle|$, the two-SCREN of  $\ket{\phi_h}_{A_1 A_2 A_3}$ between $A_1$ and $A_2A_3$ can be obtained as
\begin{align}
{\mathcal{N}_{sc}}\left(\ket{\phi_h}_{A_1|A_2 A_3}\right)=&\nonumber\\
\frac{4}{p_h^2}p^2|u_{h2}|^4&\sum_{j=1}^{d-1}\left(|a_{2j}|^2+|a_{3j}|^2\right)\sum_{k=1}^{d-1}|a_{1k}|^2,
\label{SCRENphi_h}
\end{align}
for each $h$. Thus the average of the square-root of pure state two-SCREN for the pure state decomposition in Eq.~(\ref{decomp1}) is
\begin{align}
\sum_{h}p_{h}&\sqrt{{\mathcal{N}_{sc}}\left(\ket{\phi_h}_{A_1|A_2 A_3}\right)}\nonumber\\
&=2p\sum_{h}|u_{h2}|^2\sqrt{\sum_{j=1}^{d-1}\left(|a_{2j}|^2+|a_{3j}|^2\right)\sum_{k=1}^{d-1}|a_{1k}|^2}\nonumber\\
&=2p\sqrt{\sum_{j=1}^{d-1}\left(|a_{2j}|^2+|a_{3j}|^2\right)\sum_{k=1}^{d-1}|a_{1k}|^2}
\label{avesq}
\end{align}
where the last equality is due to the unitary matrix $(u_{hl})$.

Eq.~(\ref{avesq}) implies that the average of the square-root of pure state two-SCREN does not depend on the
choice of pure state decompositions in Eq.~(\ref{decomp1}). Thus the two-SCREN of $\rho^{\left(p,~\lambda\right)}_{A_1A_2A_3}$ with respect to
he bipartition between $A_1$ and $A_2A_3$ is
\begin{align}
{\mathcal{N}_{sc}}\left(\rho^{\left(p,~\lambda\right)}_{A_1|A_2A_3}\right)=4p^2\sum_{j=1}^{d-1}\left(|a_{2j}|^2+|a_{3j}|^2\right)\sum_{k=1}^{d-1}|a_{1k}|^2.
\label{2SCrho}
\end{align}

Now we consider the two-qudit reduced density matrices of $\rho^{\left(p,~\lambda\right)}_{A_1A_2A_3}$ and their two-SCREN.
By tracing out the subsystem $A_3$, we have
\begin{widetext}
\begin{align}
\rho_{A_1A_2}=&\T_{A_3}\rho^{\left(p,~\lambda\right)}_{A_1A_2A_3}\nonumber\\
=&p\sum_{j,k=1}^{d-1}\left[a_{1j}a^*_{1k}\ket{j0}_{A_1A_2}\bra{k0}+a_{1j}a^*_{2k}\ket{j0}_{A_1A_2}\bra{0k}+a_{2j}a^*_{1k}\ket{0j}_{A_1A_2}\bra{k0}
+a_{2j}a^*_{2k}\ket{0j}_{A_1A_2}\bra {0k}\right]\nonumber\\
&+\left(p\sum_{j=1}^{d-1}|a_{3j}|^2+1-p\right)\ket{00}_{A_1A_2}\bra{00}\nonumber\\
&+\lambda\sqrt{p(1-p)}\sum_{k=1}^{d-1}\left[(a_{1k}\ket{k0}+a_{2k}\ket{0k})_{A_1A_2}\bra{00}
+a^*_{1k}\ket{00}_{A_1A_2}(\bra {k0}+a^*_{2k}\bra {0k})\right].
\label{rho12}
\end{align}
\end{widetext}

By considering two unnormalized states
\begin{align}
|\tilde{\eta}\rangle_{A_1A_2}=&\sqrt{p}\sum_{j=1}^{d-1}\left(a_{1j}\ket{j0}+a_{2j}\ket{0j}\right)_{A_1A_2}\nonumber\\
&+\lambda\sqrt{1-p}\ket{00}_{A_1A_2},\nonumber\\
|\tilde{\xi}\rangle_{A_1A_2}=&\sqrt{\sum_{j=1}^{d-1}|a_{3j}|^2+(1-p)(1-\lambda^2)}\ket{00}_{A_1A_2},
\label{12tilde}
\end{align}
the two-qudit reduced density matrix $\rho_{A_1A_2}$ in Eq.~(\ref{rho12}) can be rewritten as
\begin{align}
\rho_{A_1A_2}=|\tilde{\eta}\rangle_{A_1A_2} \langle \tilde{\eta}|+|\tilde{\xi}\rangle_{A_1A_2}\langle\tilde{\xi}|.
\label{rho12re}
\end{align}

For any pure state decomposition of $\rho_{A_1A_2}$
\begin{align}
\rho_{A_1A_2}=&\sum_{h}|\tilde{\psi_h}\rangle_{A_1 A_2} \langle\tilde{\psi_h}|\nonumber\\
=&\sum_{h}q_h \ket{\psi_h}_{A_1 A_2}\bra{\psi_h}
\label{decomp12}
\end{align}
with $q_h=\inn{\tilde{\psi_h}}{\tilde{\psi_h}}$ for each $h$,
HJW theorem in Proposition~\ref{HJWthm} assures that there exists an $r\times r$ unitary matrix $(v_{hl})$ such that
\begin{align}
|\tilde{\psi_h}\rangle_{A_1A_2}=v_{h1}|\tilde{\eta}\rangle_{A_1A_2}+v_{h2}|\tilde{\xi}\rangle_{A_1A_2},
\label{HJWrelation12}
\end{align}
for each $h$.

From a straightforward calculation, the two-SCREN of $\ket{\psi_h}_{A_1 A_2}$ in Eq.~(\ref{decomp12}) is obtained as
\begin{align}
{\mathcal{N}_{sc}}\left(\ket{\psi_h}_{A_1|A_2}\right)=
\frac{4}{q_h^2}p^2|v_{h2}|^4&\sum_{j=1}^{d-1}|a_{2j}|^2\sum_{k=1}^{d-1}|a_{1k}|^2,
\label{12puSC}
\end{align}
therefore the average of the square-root of two-SCRENs for the decomposition in Eq.~(\ref{decomp12}) is
\begin{align}
\sum_{h}q_{h}&\sqrt{{\mathcal{N}_{sc}}\left(\ket{\psi_h}_{A_1|A_2}\right)}\nonumber\\
&=2p\sum_{h}|v_{h2}|^2\sqrt{\sum_{j=1}^{d-1}|a_{2j}|^2\sum_{k=1}^{d-1}|a_{1k}|^2}\nonumber\\
&=2p\sqrt{\sum_{j=1}^{d-1}|a_{2j}|^2\sum_{k=1}^{d-1}|a_{1k}|^2},
\label{avesq12}
\end{align}
where the last equality is due to the unitary matrix $(v_{hl})$.

Similar to the case of ${\mathcal{N}_{sc}}\left(\rho^{\left(p,~\lambda\right)}_{A_1A_2A_3}\right)$, we note that the average in Eq.~(\ref{avesq12})
does not depend on the choice of pure state decomposition of $\rho_{A_1A_2}$. Thus the two-SCREN of $\rho_{A_1A_2}$ is
\begin{align}
{\mathcal{N}_{sc}}\left(\rho_{A_1|A_2}\right)=&\bigg[\min_{\{q_h, \ket{\psi_h}\}}\sum_h q_h
\sqrt{{\mathcal{N}_{sc}}\left(\ket{\psi_h}_{A_1|A_2}\right)}\bigg]^2\nonumber\\
=&4p^2\sum_{j=1}^{d-1}|a_{2j}|^2\sum_{k=1}^{d-1}|a_{1k}|^2.
\label{2SCREN12}
\end{align}
Moreover we can analogously obtain the two-SCREN of the two-qudit reduced density matrix $\rho_{A_1A_3}=\T_{A_2}\rho^{\left(p,~\lambda\right)}_{A_1A_2A_3}$
as
\begin{align}
{\mathcal{N}_{sc}}\left(\rho_{A_1|A_3}\right)
=&4p^2\sum_{j=1}^{d-1}|a_{3j}|^2\sum_{k=1}^{d-1}|a_{1k}|^2.
\label{2SCREN13}
\end{align}
From Eq.~(\ref{2SCrho}) together with Eqs.~(\ref{2SCREN12}) and (\ref{2SCREN13}) we have
\begin{align}
{\mathcal{N}_{sc}}\left(\rho^{\left(p,~\lambda\right)}_{A_1|A_2A_3}\right)={\mathcal{N}_{sc}}\left(\rho_{A_1|A_2}\right)+{\mathcal{N}_{sc}}\left(\rho_{A_1|A_3}\right)
\label{3sat}
\end{align}
for any three-qudit PCS state $\rho^{\left(p,~\lambda\right)}_{A_1A_2A_3}$.

Now we assume the induction hypothesis, that is, Eq.~(\ref{SCREN_mono_sat}) is true for any $(n-1)$-qudit PCS state, and show the validity of
Eq.~(\ref{SCREN_mono_sat}) for $n$-qudit PCS states. From Corollary~\ref{Cor: partPCS}, we note that the $n$-qudit PCS state in
Eq.~(\ref{partially coherent density matrix}) can be considered as a $(n-1)$-party PCS state in higher-dimensional quantum system
where the last two-qudit subsystem $A_{n-1}\otimes A_n$ is considered as a single subsystem.

Due to the induction hypothesis, we have
\begin{equation}
{\mathcal{N}_{sc}}\left(\rho^{(p,~\lambda)}_{A_1|A_2\cdots A_n}\right)=
\sum_{j=2}^{n-2}{\mathcal{N}_{sc}}\left(\rho_{A_1 |A_j}\right)+{\mathcal{N}_{sc}}\left(\rho_{A_1 |A_{n-1}A_n}\right),
\label{SCREN_mono_sat2}
\end{equation}
where ${\mathcal{N}_{sc}}\left(\rho_{A_1 |A_{n-1}A_n}\right)$ is the two-SCREN of the three-qudit reduced density matrix $\rho_{A_1 A_{n-1}A_n}$
with respect to the bipartition between $A_1$ and $A_{n-1}A_n$.
Moreover, Lemma~\ref{redPCS} implies that the three-qudit reduced density matrix $\rho_{A_1 A_{n-1}A_n}$ is a three-qudit PCS state.
Thus our induction hypothesis assures that
\begin{equation}
{\mathcal{N}_{sc}}\left(\rho_{A_1 |A_{n-1}A_n}\right)=
{\mathcal{N}_{sc}}\left(\rho_{A_1 |A_{n-1}}\right)+{\mathcal{N}_{sc}}\left(\rho_{A_1 |A_n}\right).
\label{SCREN_mono_sat3}
\end{equation}
Now Eq.~(\ref{SCREN_mono_sat2}) together with Eq.~(\ref{SCREN_mono_sat3}) complete the proof.
\end{proof}
For the case when $\lambda=1$, the PCS state $\rho^{(p, \lambda)}_{A_1A_2\cdots A_n}$ in Eq.~(\ref{partially coherent density matrix})
is reduced to a coherently superposed state $\ket \psi _{A_1,\cdots A_n}$ in Eq.~(\ref{superposition}) where its saturation of the monogamy inequality
in terms of SCREN was provided as a result in~\cite{CK15}. Thus Theorem~\ref{monosat} encapsulates the result of ~\cite{CK15}.

\section{SCREN Strong Monogamy Inequality for Partially Coherently Superposed States}
\label{sec: SMPCS}

Now we show the validity of SCREN SM inequality in (\ref{eq:CRENSM}) for PCS states in Eq.~(\ref{partially coherent density matrix}).
We first note that for the case when $\lambda=1$, $\rho^{(p,~\lambda)}_{A_1\cdots A_n}$ becomes the coherent superposition of
an $n$-qudit generalized W-class state and vacuum in Eq.~(\ref{superposition}) where the saturation of SCREN SM inequality for this case was already provided in~\cite{CK15}.
\begin{Prop}
For the class of $n$-qudit states that is a coherent superposition of an $n$-qudit generalized W-class state and the vacuum,
\begin{equation*}
\ket \psi _{A_1,\cdots A_n}=\sqrt{p} \left|W_n^d\right\rangle+\sqrt{1-p}\ket 0^{\otimes n},
\end{equation*}
SCREN SM inequality of entanglement is saturated;
\begin{align}
{\mathcal{N}_{sc}}\left(\ket{\psi}_{A_1|A_2\cdots A_n}\right)=\sum_{m=2}^{n-1} \sum_{\vec{j}^m}{\mathcal{N}_{sc}}\left(\rho_{A_1|A_{j^m_1}|\cdots |A_{j^m_{m-1}}}\right)^{m/2}.
\label{eq:SMsat}
\end{align}
\label{SMsat}
\label{prop: SMsat}
\end{Prop}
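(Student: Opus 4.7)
\emph{Proof plan.} The plan is to prove Proposition~\ref{prop: SMsat} by induction on the number of qudits $n$. The base case $n=3$ is immediate: on the right-hand side of Eq.~(\ref{eq:SMsat}) only $m=2$ contributes, giving $\mathcal{N}_{sc}(\rho_{A_1|A_2})+\mathcal{N}_{sc}(\rho_{A_1|A_3})$, and Theorem~\ref{monosat} specialized to $\lambda=1$ delivers exactly this equality.

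For the inductive step, suppose the proposition holds for every coherent superposition of a generalized W-class state and vacuum on fewer than $n$ qudits. The strategy is to show that every $m\ge 3$ summand on the right-hand side of Eq.~(\ref{eq:SMsat}) vanishes, so the identity collapses to the two-party SCREN sum that Theorem~\ref{monosat} already produces for $\ket{\psi}$.

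Fix $3\le m\le n-1$ and an ordered tuple $\vec{j}^m$. By Lemma~\ref{redPCS}, $\rho_{A_1A_{j_1^m}\cdots A_{j_{m-1}^m}}$ is an $m$-qudit PCS state, whose range lies in the two-dimensional subspace spanned by the associated generalized W-class state $\ket{W_m^d}$ and the vacuum $\ket{0\cdots 0}$. Proposition~\ref{HJWthm} then forces every pure state $\ket{\phi_h}$ in any decomposition of this reduced state to lie in that two-dimensional span. After factoring out a global phase, such a $\ket{\phi_h}$ can be rewritten as $\sqrt{q_h}\ket{\widetilde{W}_m^d}+\sqrt{1-q_h}\ket{0\cdots 0}$, where $\ket{\widetilde{W}_m^d}$ is obtained from $\ket{W_m^d}$ by a common phase rotation of all coefficients $a_{ij}$ and therefore still satisfies Eq.~(\ref{generalized W state}) with the same normalization. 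Hence each $\ket{\phi_h}$ is itself a coherent superposition of a W-class state and the vacuum on $m<n$ qudits, so the inductive hypothesis, together with Eq.~(\ref{eq:nCRENpure}), yields $\mathcal{N}_{sc}(\ket{\phi_h}_{A_1|A_{j_1^m}|\cdots|A_{j_{m-1}^m}})=0$. Substituting this specific decomposition into the convex-roof formula for the mixed-state $m$-SCREN collapses the optimizand to zero, so $\mathcal{N}_{sc}(\rho_{A_1|A_{j_1^m}|\cdots|A_{j_{m-1}^m}})=0$.

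Combining the ingredients, Theorem~\ref{monosat} gives $\mathcal{N}_{sc}(\ket{\psi}_{A_1|A_2\cdots A_n})=\sum_{j=2}^{n}\mathcal{N}_{sc}(\rho_{A_1|A_j})$, and the vanishing of all higher-$m$ contributions turns this into the full right-hand side of Eq.~(\ref{eq:SMsat}), completing the induction. The step I expect to require the most care is the HJW-based identification of decomposition vectors: one must verify that the rephased state $e^{i\theta}\ket{W_m^d}$ genuinely belongs to the class defined by Eq.~(\ref{generalized W state}), so that the inductive hypothesis legitimately covers every pure state arising in every possible decomposition of the reduced PCS state.
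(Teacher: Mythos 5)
Your proof is correct, but note that the paper itself does not prove Proposition~\ref{prop: SMsat} at all: it imports the statement as a known result from Ref.~\cite{CK15} and only uses it afterwards as an ingredient in Theorem~\ref{nscren0} and Corollary~\ref{PCSSMsat}. What you have done is therefore genuinely different, and arguably more valuable: you make the proposition self-contained by running the paper's own downstream machinery recursively. Concretely, your induction on $n$ uses Theorem~\ref{monosat} (at $\lambda=1$) for the $m=2$ layer, and for each $m\geq 3$ you reproduce, one dimension down, exactly the argument the paper later gives for Theorem~\ref{nscren0} and Corollary~\ref{PCSSMsat}: Lemma~\ref{redPCS} shows each $m$-qudit marginal is again a PCS state supported on $\mathrm{span}\{\ket{W^d_m},\ket{0}^{\otimes m}\}$, the HJW theorem (Proposition~\ref{HJWthm}) forces every vector of every decomposition into that span, and the inductive hypothesis kills each pure-state $m$-SCREN and hence the convex roof. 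There is no circularity, since Theorem~\ref{monosat} and Lemma~\ref{redPCS} are proved independently of Proposition~\ref{prop: SMsat}. The step you flagged as delicate is handled correctly: absorbing the relative phase into the coefficients $a_{ij}$ preserves both membership in the class of Eq.~(\ref{generalized W state}) and the normalization, so the inductive hypothesis legitimately covers every $\ket{\phi_h}$; the only remaining (cosmetic) degenerate case is a marginal with $\Omega=0$, i.e.\ a pure vacuum reduced state, for which all SCRENs vanish trivially. In exchange for being longer than a citation, your route removes the external dependence on~\cite{CK15} and exhibits Proposition~\ref{prop: SMsat}, Theorem~\ref{nscren0} and Corollary~\ref{PCSSMsat} as a single inductive bootstrap.
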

Thus our new result derived below encapsulates the result in~\cite{CK15} as a special case.
To generalize Proposition~\ref{prop: SMsat} for arbitrary PCS states, we first provide the following theorem.

\begin{Thm}
For the PCS state in Eq.~(\ref{partially coherent density matrix}), its $n$-SCREN is zero,
\begin{align}
{\mathcal{N}_{sc}}\left(\rho^{(p,~\lambda)}_{A_1|A_2|\cdots |A_n}\right)=0.
\label{eq: nsc0}
\end{align}
\label{nscren0}
\end{Thm}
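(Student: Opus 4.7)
The plan is to exhibit an explicit two-element pure-state decomposition of $\rho^{(p,~\lambda)}_{A_1\cdots A_n}$ in which both pure summands have vanishing pure-state $n$-SCREN, and then invoke the fact that the convex-roof-extended mixed-state $n$-SCREN (defined in analogy with Eq.~(\ref{ntanglemix})) must also vanish because it is bounded below by zero and the particular decomposition realizes the value zero.

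First I would generalize the three-qudit rewriting of Eqs.~(\ref{3xytilde})--(\ref{PCS3re}) to $n$ qudits by taking the unnormalized states $|\tilde x\rangle = \sqrt{p}\,\left|W_n^d\right\rangle + \lambda\sqrt{1-p}\,\ket 0^{\otimes n}$ and $|\tilde y\rangle = \sqrt{(1-p)(1-\lambda^2)}\,\ket 0^{\otimes n}$. A direct expansion of $|\tilde x\rangle\langle\tilde x| + |\tilde y\rangle\langle\tilde y|$ reproduces Eq.~(\ref{partially coherent density matrix}) exactly, so $\rho^{(p,~\lambda)}_{A_1\cdots A_n}$ decomposes into the probabilities $p_x = p + \lambda^2(1-p)$ and $p_y = (1-p)(1-\lambda^2)$, with normalized components $\ket x = |\tilde x\rangle/\sqrt{p_x}$ and $\ket y = \ket 0^{\otimes n}$.

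Next I would recognize that $\ket x$, after the rescaling $p\mapsto p'' = p/p_x$, is of exactly the form in Eq.~(\ref{superposition})---a coherent superposition of $\left|W_n^d\right\rangle$ and the vacuum---since $\lambda^2(1-p)/p_x = 1-p''$. Proposition~\ref{prop: SMsat} then applies directly to $\ket x$ and yields $\mathcal{N}_{sc}(\ket x_{A_1|A_2|\cdots|A_n}) = 0$. The second component $\ket y = \ket 0^{\otimes n}$ is fully unentangled, so every bipartite SCREN and every $m$-SCREN of any reduced density matrix appearing on the right-hand side of Eq.~(\ref{eq:nCRENpure}) is zero for $\ket y$, giving $\mathcal{N}_{sc}(\ket y_{A_1|A_2|\cdots|A_n}) = 0$ trivially.

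With both summands having zero pure-state $n$-SCREN, the convex-roof sum $\sum_h p_h\sqrt{\mathcal{N}_{sc}(\ket{\psi_h}_{A_1|A_2|\cdots|A_n})}$ for the decomposition $\{(p_x,\ket x),(p_y,\ket y)\}$ vanishes identically, forcing the minimum, and hence $\mathcal{N}_{sc}(\rho^{(p,~\lambda)}_{A_1|A_2|\cdots|A_n})$, to be zero as well. The main obstacle lies in the clean identification of the normalized state $\ket x$ with a coherent superposition of the form in Eq.~(\ref{superposition}), so that Proposition~\ref{prop: SMsat} can be invoked as a black box; once that structural observation is in hand, the remainder of the argument reduces to direct substitution and the triviality of entanglement in a fully factorized state.
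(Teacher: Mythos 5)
Your decomposition is exactly the one the paper uses (your $|\tilde x\rangle,|\tilde y\rangle$ are the paper's $|\tilde{\eta}\rangle,|\tilde{\xi}\rangle$ of Eq.~(\ref{spec1})), your identification of the normalized component $\ket{x}$ as a coherent superposition of the form in Eq.~(\ref{superposition}) with $p''=p/p_x$ is correct, and the appeal to Proposition~\ref{prop: SMsat} is the right key lemma. The one genuine gap is the final step, where you argue that the convex roof ``is bounded below by zero and the particular decomposition realizes the value zero.'' The mixed-state $n$-SCREN is $\bigl[\min_{\{p_h,\ket{\psi_h}\}}\sum_h p_h\sqrt{{\mathcal{N}_{sc}}(\ket{\psi_h}_{A_1|A_2|\cdots|A_n})}\bigr]^2$, and the pure-state $n$-SCREN under the square root is defined in Eq.~(\ref{eq:nCRENpure}) as a difference whose nonnegativity is precisely the strong-monogamy conjecture; it is not known to be nonnegative for an arbitrary pure state. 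So ``bounded below by zero'' is not free: if some other decomposition contained a pure state with negative $n$-SCREN, the quantity being minimized would not even be a well-defined nonnegative real there, and exhibiting one decomposition that attains zero would not by itself pin down the minimum.

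The repair is exactly what the paper does. Since $\rho^{(p,\lambda)}_{A_1\cdots A_n}=|\tilde x\rangle\langle\tilde x|+|\tilde y\rangle\langle\tilde y|$ has rank at most two, the HJW theorem (Proposition~\ref{HJWthm}) shows that \emph{every} pure state in \emph{every} decomposition has the form $u_{h1}|\tilde x\rangle+u_{h2}|\tilde y\rangle$, i.e., is again a coherent superposition of $\left|W_n^d\right\rangle$ and the vacuum; Proposition~\ref{prop: SMsat} then gives ${\mathcal{N}_{sc}}(\ket{\psi_h}_{A_1|A_2|\cdots|A_n})=0$ for every member of every decomposition, so every decomposition --- not just your chosen one --- yields average zero and the minimization is trivial. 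Your argument needs only this one additional observation to close; everything else (the explicit two-vector decomposition, the reparametrization $p\mapsto p''$, the triviality of the vacuum component) matches the paper's proof.
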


\begin{proof}
From the definition of the mixed state $n$-SCREN,
we have
\begin{align}
{\mathcal{N}_{sc}}\left(\rho^{(p,~\lambda)}_{A_1|A_2|\cdots |A_n}\right)=&\nonumber\\
\bigg[\min_{\{p_h, \ket{\psi_h}\}}\sum_h p_h&
\sqrt{{\mathcal{N}_{sc}}\left(\ket{\psi_h}_{A_1|A_2|\cdots |A_n}\right)}\bigg]^2,
\label{nSCRENPCSmix}
\end{align}
where the minimization is over all possible pure-state decompositions of
\begin{align}
\rho^{(p,~\lambda)}_{A_1A_2\cdots A_n}=\sum_{h}\ket{\psi_h}_{A_1A_2\cdots A_n}\bra{\psi_h}.
\label{PCSdec}
\end{align}

Let us consider two unnnormalized states in an $n$-qudit system
\begin{align}
|\tilde{\eta}\rangle =&\sqrt{p}\left|W_n^d\right\rangle+\lambda
\sqrt{1-p}\ket 0^{\otimes n},\nonumber\\
|\tilde{\xi}\rangle =&\sqrt{(1-p)(1-\lambda^2)}\ket 0^{\otimes
n},
\label{spec1}
\end{align}
where $\left|W_n^d\right\rangle$ is the $n$-qudit W-class state and $\ket 0^{\otimes
n}$ is the vacuum. Then the PCS state in Eq.~(\ref{partially coherent density matrix}) can be represented as
\begin{align}
\rho^{(p,~\lambda)}_{A_1A_2\cdots A_n}=
|\tilde{\eta}\rangle \langle\tilde{\eta}|+|\tilde{\xi}\rangle \langle\tilde{\xi}|.
\end{align}

From the HJW theorem in Proposition~\ref{HJWthm}, any pure state decomposition of
$\rho^{(p,~\lambda)}_{A_1A_2\cdots A_n}=\sum_{h=1}^{r}|\tilde{\psi}_h \rangle
\langle \tilde{\psi}_h |$ of size $r$ can be realized by some
choice of an $r\times r$ unitary matrix $(u_{ij})$ such that
\begin{align}
|\tilde{\psi}_h \rangle=&u_{h1}|\tilde{\eta} \rangle+u_{h2}|\tilde{\xi}\rangle\nonumber\\
=&u_{h1}\sqrt{p}\left|W_n^d\right\rangle\nonumber\\
&+\left(u_{h1}\lambda
\sqrt{1-p}+u_{h2}\sqrt{(1-p)(1-\lambda^2)}\right)\ket 0^{\otimes
n}.
\label{sup2}
\end{align}
By considering the normalization
$|\tilde{\psi}_h \rangle=\sqrt{p_h}|{\psi}_h \rangle$ with $p_h=\inn{\tilde{\psi_h}}{\tilde{\psi_h}}$ for each $h$, we have
\begin{align}
\rho^{(p,~\lambda)}_{A_1A_2\cdots A_n}=\sum_{h=1}^{r}p_h |{\psi}_h \rangle \langle {\psi}_h|.
\label{dec2}
\end{align}

Because $|\tilde{\psi}_h \rangle$ in Eq.~(\ref{sup2}) is a coherent superposition of an $n$-qudit W-class state and vacuum, so is its
normalized state $\ket{\psi_h}$ in Eq.~(\ref{dec2}).
In other words, any pure state appears in any pure-state decomposition of $\rho^{(p,~\lambda)}_{A_1A_2\cdots A_n}$ is a coherent superposition
an $n$-qudit W-class state and vacuum. Thus Proposition~\ref{prop: SMsat} assures that
\begin{align}
{\mathcal{N}_{sc}}\left(\ket{\psi_h}_{A_1|A_2|\cdots |A_n}\right)=0,
\label{nscsat2}
\end{align}
and this implies
\begin{align}
\sum_h p_h\sqrt{{\mathcal{N}_{sc}}\left(\ket{\psi_h}_{A_1|A_2|\cdots |A_n}\right)}=0
\label{av0}
\end{align}
for any pure-state decomposition of $\rho^{(p,~\lambda)}_{A_1A_2\cdots A_n}$ in Eq.~(\ref{dec2}). Now
Eqs.~(\ref{nSCRENPCSmix}) and ~(\ref{av0}) lead us to Eq.~(\ref{eq: nsc0}), which completes the proof.
\end{proof}

The following corollary generalizes Proposition~\ref{prop: SMsat} for arbitrary PCS states, that is,
the SCREN SM inequality in (\ref{eq:CRENSM}) is saturated by PCS states in Eq.~(\ref{partially coherent density matrix}).
\begin{Cor}
For the PCS states in Eq.~(\ref{partially coherent density matrix}), we have
\begin{align}
{\mathcal{N}_{sc}}\left(\rho^{(p,~\lambda)}_{A_1|A_2\cdots A_n}\right)=\sum_{m=2}^{n-1} \sum_{\vec{j}^m}{\mathcal{N}_{sc}}\left(\rho_{A_1|A_{j^m_1}|\cdots |A_{j^m_{m-1}}}\right)^{m/2},
\label{eq:CRENSMsatPCS}
\end{align}
where the index vector $\vec{j}^m=(j^m_1,\ldots,j^m_{m-1})$ spans all the ordered subsets of the index set $\{2,\ldots,n\}$ with $(m-1)$ distinct elements,
and $\rho_{A_1A_{j^m_1}\cdots A_{j^m_{m-1}}}$ is the $m$-qudit reduced density matrix of $\rho^{(p,~\lambda)}_{A_1A_2\cdots A_n}$ on subsystems $A_1A_{j^m_1}\cdots A_{j^m_{m-1}}$.
\label{PCSSMsat}
\end{Cor}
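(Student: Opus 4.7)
The plan is to reduce the statement of Corollary~\ref{PCSSMsat} to the combination of three results already established: Theorem~\ref{monosat} (saturation of the two-party SCREN monogamy inequality), Lemma~\ref{redPCS} (closure of the PCS class under partial trace), and Theorem~\ref{nscren0} (vanishing of the $n$-SCREN on PCS states). Once those three ingredients are lined up, the corollary should follow essentially by bookkeeping, so the proof should be short.

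First I would split the right-hand side of (\ref{eq:CRENSMsatPCS}) according to the number of parties $m$:
\begin{align*}
\text{RHS}=&\sum_{j=2}^{n}{\mathcal{N}_{sc}}\left(\rho_{A_1 |A_j}\right)\\
&+\sum_{m=3}^{n-1} \sum_{\vec{j}^m}{\mathcal{N}_{sc}}\left(\rho_{A_1|A_{j^m_1}|\cdots |A_{j^m_{m-1}}}\right)^{m/2}.
\end{align*}
By Theorem~\ref{monosat}, the first sum is exactly ${\mathcal{N}_{sc}}\left(\rho^{(p,~\lambda)}_{A_1|A_2\cdots A_n}\right)$, which is the left-hand side of (\ref{eq:CRENSMsatPCS}). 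Hence the entire claim reduces to showing that each term in the second sum vanishes.

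For each fixed $m\ge 3$ and each ordered subset $\vec{j}^m$, I would invoke Lemma~\ref{redPCS} (iterated over the subsystems being traced out) to conclude that the $m$-qudit reduced density matrix $\rho_{A_1 A_{j^m_1}\cdots A_{j^m_{m-1}}}$ is itself a PCS state, with rescaled parameters $p'$ and $\lambda'$ determined by Eqs.~(\ref{newp}) and (\ref{newlam}). Applying Theorem~\ref{nscren0} to this reduced $m$-qudit PCS state then gives ${\mathcal{N}_{sc}}\left(\rho_{A_1|A_{j^m_1}|\cdots |A_{j^m_{m-1}}}\right)=0$, so that the corresponding contribution to the second sum is zero. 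Summing over $m$ and $\vec{j}^m$ eliminates the second sum entirely, which proves (\ref{eq:CRENSMsatPCS}).

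There is really no hard step here; the only small subtlety to verify is that the quantity denoted ${\mathcal{N}_{sc}}\left(\rho_{A_1|A_{j^m_1}|\cdots |A_{j^m_{m-1}}}\right)$ appearing in (\ref{eq:CRENSMsatPCS}) is precisely the convex-roof-extended mixed-state $m$-SCREN defined via (\ref{nSCRENPCSmix}), which is the quantity that Theorem~\ref{nscren0} shows to vanish on arbitrary PCS states. Since the theorem is stated for an arbitrary number of parties, this matching is immediate, and the corollary follows.
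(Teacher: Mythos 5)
Your proof is correct and takes essentially the same route as the paper's: the same splitting of the right-hand side into the $m=2$ terms plus the $m\geq 3$ terms, the same use of Lemma~\ref{redPCS} to identify each $m$-qudit reduced density matrix as a PCS state, and the same application of Theorem~\ref{nscren0} to kill those terms and Theorem~\ref{monosat} to match the remaining sum with the left-hand side. No gaps.
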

\begin{proof}
Eq.~(\ref{eq:CRENSMsatPCS}) can be decomposed as
\begin{align}
\sum_{j=2}^{n}{\mathcal{N}_{sc}}\left(\rho_{A_1|A_j}\right)+\sum_{m=3}^{n-1} \sum_{\vec{j}^m}{\mathcal{N}_{sc}}
\left(\rho_{A_1|A_{j^m_1}|\cdots |A_{j^m_{m-1}}}\right)^{m/2}.
\label{compar2}
\end{align}
For each index vector $\vec{j}^m=(j^m_1,\ldots,j^m_{m-1})$, Lemma~\ref{redPCS} assures that the $m$-qudit reduced density matrix $\rho_{A_1A_{j^m_1}\cdots A_{j^m_{m-1}}}$
is a PCS state of a $m$-qudit W-class state and vacuum. Thus we have
\begin{align}
{\mathcal{N}_{sc}}\left(\rho_{A_1|A_{j^m_1}|\cdots |A_{j^m_{m-1}}}\right)=0
\label{eq: nsc02}
\end{align}
by Theorem~\ref{nscren0}. Now, Eqs.~(\ref{compar2}) and (\ref{eq: nsc02}) together with Theorem~\ref{monosat} complete the proof.
\end{proof}

\section{Conclusions}\label{Sec: Conclusion}

We have considered a large class of multi-qudit mixed state that are in a partially coherent superposition of a generalized W-class state and the vacuum,
and have provided various useful properties about the structure of partially coherently superposed states. We have shown that CKW-type monogamy inequality
of multi-qudit entanglement holds for this class of PCS states in terms of SCREN. We have further shown that SCREN SM inequality of multi-qudit entanglement
is saturated for PCS states.

Our result proposes the use of SCREN over tangle in characterizing strongly monogamous property of multi-party quantum entanglement
by providing analytic proofs of the SCREN monogamy inequalities for some class of multi-party PCS states.
However, it is also interesting and important to investigate how the class of PCS states behaves with respect to the corresponding
tangle based monogamy inequalities. We believe this investigation will provide a better clarification of the usefulness of PCS states
in understanding the constraints on description of entanglement distribution using different kinds of entanglement measures.

Our result presented here is the first case where strong monogamy inequality of multi-qudit mixed states is studied.
Noting the importance of the study on multi-party quantum entanglement,
our result can provide a rich reference for future
work on the study of multi-party quantum entanglement.

\section*{Acknowledgments}
This research was supported by Basic Science Research Program through the National Research Foundation of Korea(NRF)
funded by the Ministry of Education, Science and Technology(NRF-2014R1A1A2056678).


\end{document}